\numberwithin{equation}{section}
\theoremstyle{plain}
\newtheorem{theorem}{Theorem}
\newtheorem{lemma}[theorem]{Lemma}
\newtheorem{proposition}[theorem]{Proposition}
\newtheorem{remark}[theorem]{Remark}
\begin{document}

\begin{center}
  \Large \bf Option Pricing for the Variance Gamma Model: A New Perspective
\end{center}

\author{}
\begin{center}
Yuanda Chen\,\footnote{Department of Mathematics, Florida State University, 1017 Academic Way, Tallahassee, FL-32306, United States of America; ychen@math.fsu.edu},
Zailei Cheng\,\footnote{Department of Mathematics, Florida State University, 1017 Academic Way, Tallahassee, FL-32306, United States of America; zc12b@my.fsu.edu},
Haixu Wang\,\footnote{Department of Mathematics, Florida State University, 1017 Academic Way, Tallahassee, FL-32306, United States of America; hwang@math.fsu.edu}
\end{center}

\begin{center}
 \today
\end{center}

\begin{abstract}
The variance gamma model is a widely popular model for option pricing in both academia and industry. 
In this paper, we provide a new perspective for pricing European style options
for the variance gamma model by deriving closed-form formulas combining
the randomization method and fractional derivatives. We also compare our results
with various existing results in the literature by numerical examples.
\end{abstract}

\section{Introduction}

The variance gamma process is a pure jump L\'{e}vy process that is widely used in option pricing. The
process is obtained by evaluating arithmetic Brownian motion with drift at a random time given by a gamma process with unit mean rate and certain variance rate\footnote{See Section~\ref{sec:vg} for definitions.}.
The variance gamma process was first introduced by Madan and Seneta \cite{MS} as a candidate
for modeling the underlying stock market returns, as an alternative to the classical Black-Scholes
model. This alternative was sought primarily to obtain a process consistent with the observation
that the local movement of log stock prices is heavy-tailed relative to the normal distribution,
while the movement over large time intervals approaches normalities.
The variance gamma process was first considered in the context of option pricing in Madan
and Milne (1991) \cite{MM}, where it was used to price European options.
By comparing with the Black-Scholes model, they found that the variance gamma option values
are higher, in particular, for out-of-the-money options with large maturity on stocks
with high means, low variances and high kurtosis.
As the variance gamma process is obtained by evaluating Brownian motion with drift
at a random time governed by a gamma process, the additional parameters, i.e. the drift
of the Brownian motion, and the volatility of the time change, provide control over
the skewness and kurtosis of the return distribution. Madan et al. \cite{Madan} estimated
the statistical and risk-neutral densities using the S\&P 500 data, and they observed that
the statistical density is symmetric with some kurtosis, while the risk neutral density is negatively skewed with a larger kurtosis. The additional parameters also correct for pricing biases of the Black-Scholes model.
 
To obtain prices of European options under a variance gamma model, 
Madan and Milne (1991) \cite{MM} derived the option price as a weighted average
of Black-Scholes formula with the maturity averaged with respect to a gamma density.
Madan et al. (1998) \cite{Madan} built on the formula in \cite{MM}
and after applying a series of change of variables, they derived an alternative
analytical formula for the option price using some special functions.
Carr and Madan (1998) \cite{CarrMadan} proposed to use the fast Fourier transform method, 
and as a special case, the variance gamma model was studied.
Fu (2000) \cite{Fu} surveyed Monte Carlo methods for the variance gamma model, 
and discussed three methods for sequential simulation of the variance gamma process, two bridge sampling methods,
variance reduction via importance sampling, and estimation of the Greeks.
Cont and Voltchkova (2005) \cite{Cont} described a finite difference scheme for solving the partial
integro-differential equation arising from pricing European options under a variance gamma model.
For pricing American options and path-dependent options under variance gamma models, we refer the readers to \cite{Hirsa, AO, Kaishev2009} and references therein for further details. 

In this paper, we provide a new alternative analytical formula for pricing European style options
for the variance gamma model, see Theorem~\ref{MainThm}. The closed-form formula is derived by 
combining the Carr's randomization method \cite{Carr98} and fractional derivatives. 
The variance gamma model can be viewed as the Black-Scholes model with a random maturity
which follows a gamma distribution. Carr's randomization method was originally applied to compute the (American) option price
for Black-Scholes model by approximating a finite maturity by a sequence of Erlang distributed
random maturities. Carr \cite{Carr98} first derived the formula for the exponential distributed maturity,
and by noting that the Erlang density is a derivative of integer orders of the exponential density, 
the formulas for the Erlang distributed random maturities were then derived. Since the variance gamma model can be viewed
as the Black-Scholes model with maturity randomized by a gamma distribution,
Carr's randomization method provides explicit formula for a subclass of the variance
gamma model, that is when the maturity is Erlang distributed\footnote{The Erlang distribution 
is a subclass of the gamma distribution.}. We apply Carr's randomization method 
to price European options for the variance gamma model, and the formula is exact when 
the random maturity is Erlang distributed. To fill in the gap between the Erlang maturity and the more
general gamma maturity, we use the tools from the fractional calculus.
This methodology is intuitive since the Erlang density can be obtained by taking derivatives of integer
orders from the exponential density, and for the more general gamma density, fractional derivatives
are the natural tools. We will introduce a non-conventional fractional derivative that is tailored to the
particular application in our context to do the work.

Fractional derivatives and fractional calculus have been used in the
literature on option pricing. For example, Cartea and del-Castillo-Negrete
\cite{Cartea2007} showed that European-style option prices under particular
L\'{e}vy processes (including the CGMY model) are governed by a fractional
partial differential equation (FPDE) with two spatial fractional
derivatives capturing the non-locality induced by pure jumps in the
underlying price. See also \cite{BoLe2002, Cartea2013, ChenXuZhu2014,
Fallahgoul2016, ItkinCarr2012, MaromMomoniat2009} and the references
therein for applications of fractional calculus in option pricing. In these
studies, the fractional derivative is typically taken with respect to
either space (log stock price) or time (time to maturity). Our work differs
from these studies in that there is no FPDE involved and the fractional
derivative is taken with respect to the Laplace exponent of the Laplace transform of the option price.

We also compare our results
with various existing methods in the literature by numerical examples.
More precisely, we compare our results with the option pricing formula in Madan et al. \cite{Madan} (MCC)
and the fast Fourier transform method in Carr and Madan \cite{CarrMadan} (FFT).
The runtime of our method is comparable with both MCC and FFT methods. 
When the maturity is Erlang distributed, our method is consistently
faster across different regimes.
Finally, in the short-maturity regimes, our method beats both MCC and FFT methods.

\subsection{Organization of the paper}
The rest of the paper is organized as follows. Section~2 reviews the definitions of variance gamma processes and fractional derivatives. Section~3 presents our main results on alternative closed-form formulas for European options under a variance gamma model. Section~4 discusses numerical experiments based on our new formulas and compare its efficiency with exsiting methods. Auxiliary technical proofs are collected in the appendix.

\section{Preliminaries}
\subsection{A review of variance gamma processes} \label{sec:vg}
This section reviews the definition of variance gamma processes. See e.g. \cite{Madan} for details. 

Let us consider a Brownian motion with drift model $X({t})$ which has an infinitesimal generator given by:
\begin{equation} \label{eq:generator}
\mathcal{L}f(x)=\mu f'(x)+\frac{1}{2}\sigma^{2}f''(x),
\end{equation}
for any $f$ in the domain of the infinitesimal generator.
The variance gamma process is obtained by evaluating the process $X$ at a random time change given by a gamma process
$\{\gamma(t): t \ge 0 \}$ with mean rate one, variance rate $\nu$, and $\gamma(0)=0.$ Note the gamma process is a L\'{e}vy process with independent gamma increments over non-lapping time intervals. The density of the increment on the time interval $(t, t+h)$ is given by the gamma density function with shape parameter $\alpha = h/\nu$ and rate parameter $\beta = 1/\nu$, i.e., the density is
\begin{equation}
f_h(x)= \left(\frac{1}{\nu}\right)^{\frac{h}{\nu}} \frac{x^{h/\nu - 1} \exp(-\frac{1}{\nu} x)}{\Gamma(h/\nu)}, \quad \text{for $x>0$.}
\end{equation}
The variance gamma process is then defined by
\begin{eqnarray}
X (\gamma(t)), \quad \text{for $t \ge 0$.}
\end{eqnarray}
where $\gamma(t)$ is a Gamma random variable with shape parameter $\alpha = t/\nu$ and rate parameter $\beta = 1/\nu.$ Hence $\mathbb{E}[\gamma(t)] = t$ and $Var(\gamma(t))= t \nu.$ 
In the variance gamma model, the stock price $S(t)$ is defined
as the exponential of the variance gamma process, that is,
\begin{equation}
S(t)=e^{X(\gamma(t))}.
\end{equation}


\subsection{Fractional Derivatives}
In this section, we introduce the definition of fractional derivatives and their properties that we will work with in the paper.

Let us define the $\alpha$-th order fractional derivative as\footnote{In the literature, there are many ways
to define fractional derivatives. For example, a more common
definition is given by $\hat{D}^{\alpha}_{x}f(x)=\frac{1}{\Gamma(1-\alpha)}\frac{d}{dx}
\int_{-\infty}^{x}(x-t)^{-\alpha}f(t)dt$, which is known 
as the Riemann-Liouville Fractional Derivative, see e.g. \cite{MR}. 
This definition can be applied to compute
the fractional derivative of $e^{\lambda x}$, where $\lambda>0$,
so that $\hat{D}^{\alpha}_{x}e^{\lambda x}=\lambda^{\alpha}e^{\lambda x}$, 
but it can not be applied to $e^{-\lambda x}$ that we want, where $\lambda>0$,
in the sense that $\hat{D}^{\alpha}_{x}e^{-\lambda x}$ does not exist.}
\begin{equation}
D^{\alpha}_{x}f(x)=\frac{1}{\Gamma(1-\alpha)}\frac{d}{dx}
\int_{x}^{\infty}(t-x)^{-\alpha}f(t)dt,
\end{equation}
where $\alpha<1$ provided that $D^{\alpha}_{x}f(x)$ exists and is finite.
When $\alpha$ is a non-negative integer, it coincides with the ordinary derivatives of order $\alpha$. 
For $n<\alpha<n+1$, where $n$ is a non-negative integer, $D^{\alpha}_{x}$ can be equivantly operated by
first applying the ordinary derivatives $\frac{d^{n}}{dx^{n}}$ and then applying the fractional derivative $D^{\alpha-n}_{x}$.

We now present two results on properties of fractional deratives. Lemma~\ref{expLemma} will be useful in our theoretical analysis, while Lemma~\ref{ImplementLemma} will be useful for 
efficient numerical computations of option prices under the Variance Gamma model.

\begin{lemma}\label{expLemma}
For any $\lambda>0$ and $\alpha<1$,
$D^{\alpha}_{x}e^{-\lambda x}=-\lambda^{\alpha}e^{-\lambda x}$.
\end{lemma}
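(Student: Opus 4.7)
The plan is to compute the inner integral in closed form, then differentiate. Writing $I(x) = \int_x^\infty (t-x)^{-\alpha} e^{-\lambda t}\,dt$, I would first perform the substitution $u = t-x$ (so $du = dt$) to separate $x$ out of the integrand:
\begin{equation*}
I(x) = e^{-\lambda x}\int_0^\infty u^{-\alpha} e^{-\lambda u}\,du.
\end{equation*}
The remaining integral is a standard Gamma-function evaluation: substitute $v = \lambda u$ to obtain
\begin{equation*}
\int_0^\infty u^{-\alpha} e^{-\lambda u}\,du \;=\; \lambda^{\alpha-1}\int_0^\infty v^{-\alpha} e^{-v}\,dv \;=\; \lambda^{\alpha-1}\,\Gamma(1-\alpha).
\end{equation*}

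Convergence of this integral at $u=0$ requires $-\alpha > -1$, i.e., $\alpha < 1$, which is exactly the hypothesis (and the reason the $\Gamma(1-\alpha)$ prefactor in the definition is finite). Convergence at infinity is automatic because of the exponential decay from $\lambda > 0$.

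Combining these gives $I(x) = \Gamma(1-\alpha)\,\lambda^{\alpha-1} e^{-\lambda x}$. Substituting back into the definition of $D^\alpha_x$ and differentiating in $x$:
\begin{equation*}
D^\alpha_x e^{-\lambda x} \;=\; \frac{1}{\Gamma(1-\alpha)}\,\frac{d}{dx}\!\left[\Gamma(1-\alpha)\,\lambda^{\alpha-1} e^{-\lambda x}\right] \;=\; \lambda^{\alpha-1}\cdot(-\lambda)\,e^{-\lambda x} \;=\; -\lambda^{\alpha} e^{-\lambda x},
\end{equation*}
which is the claim. There is no real obstacle here beyond being careful about the range of $\alpha$ to ensure the integral converges; this is precisely why the author's non-conventional definition integrates over $(x,\infty)$ rather than $(-\infty,x)$, making $e^{-\lambda x}$ (rather than $e^{\lambda x}$) the natural eigenfunction.
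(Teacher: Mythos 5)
Your proof is correct and follows essentially the same route as the paper's: evaluate the inner integral in closed form as $\Gamma(1-\alpha)\,\lambda^{\alpha-1}e^{-\lambda x}$ and then differentiate in $x$. Your single substitution $u=t-x$ is in fact a bit cleaner than the paper's chain of rescalings (which implicitly assumes $x>0$ at the step $\int_x^\infty(t-x)^{-\alpha}e^{-\lambda t}\,dt=x^{1-\alpha}\int_1^\infty(t-1)^{-\alpha}e^{-\lambda xt}\,dt$), and your remarks on convergence at $u=0$ and at infinity are exactly the right points to check.
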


\begin{lemma}\label{ImplementLemma}
Let $\alpha<1$. Assume that $f$ is twice differentiable, 
and for every $x$, $\lim_{t\rightarrow\infty}f(t)(t-x)^{1-\alpha}
=\lim_{t\rightarrow\infty}f'(t)(t-x)^{1-\alpha}=0$.
Then, we have
\begin{equation}
D^{\alpha}_{x}f(x)
=\frac{-1}{\Gamma(1-\alpha)(1-\alpha)}x^{2-\alpha}\int_{0}^{1}y^{\alpha-3}\left(1-y\right)^{1-\alpha}f''\left(\frac{x}{y}\right)dy.
\end{equation}
\end{lemma}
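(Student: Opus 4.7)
The plan is to reduce the statement to a pair of integrations by parts together with a change of variables, using the definition of $D^{\alpha}_{x}$ and the decay hypotheses exactly where they are needed.

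Set $I(x)=\int_{x}^{\infty}(t-x)^{-\alpha}f(t)\,dt$, so $D^{\alpha}_{x}f(x)=\frac{1}{\Gamma(1-\alpha)}I'(x)$. First I would integrate by parts in $I$, with $u=f(t)$ and $dv=(t-x)^{-\alpha}dt$, giving $v=(t-x)^{1-\alpha}/(1-\alpha)$. The boundary term at $t=x$ vanishes because $1-\alpha>0$, and the boundary term at $t=\infty$ vanishes by the hypothesis $\lim_{t\to\infty}f(t)(t-x)^{1-\alpha}=0$. This yields
\begin{equation*}
I(x)=-\frac{1}{1-\alpha}\int_{x}^{\infty}(t-x)^{1-\alpha}f'(t)\,dt.
\end{equation*}

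Next I would differentiate under the integral using Leibniz's rule. The boundary contribution from the moving lower limit is $(t-x)^{1-\alpha}f'(t)\bigm|_{t=x}=0$ (again since $1-\alpha>0$), so only the interior derivative survives and $I'(x)=\int_{x}^{\infty}(t-x)^{-\alpha}f'(t)\,dt$. Now I would integrate by parts once more, this time with $u=f'(t)$ and the same $v$; the boundary term at infinity vanishes by the second hypothesis, $\lim_{t\to\infty}f'(t)(t-x)^{1-\alpha}=0$, and I obtain
\begin{equation*}
I'(x)=-\frac{1}{1-\alpha}\int_{x}^{\infty}(t-x)^{1-\alpha}f''(t)\,dt.
\end{equation*}

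Finally I would perform the substitution $t=x/y$ (so $dt=-x\,y^{-2}\,dy$, $t-x=x(1-y)/y$, and the limits $t=x,\infty$ map to $y=1,0$), which immediately turns $(t-x)^{1-\alpha}\,dt$ into $x^{2-\alpha}y^{\alpha-3}(1-y)^{1-\alpha}\,dy$ and delivers the claimed identity after dividing by $\Gamma(1-\alpha)$.

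The main obstacle is not the algebra, which is short, but the bookkeeping: verifying that each boundary term in the two integrations by parts actually cancels and that differentiation under the integral is legitimate. The stated hypotheses are tailored exactly for this—the decay on $f$ handles the first IBP, the condition $\alpha<1$ kills the lower boundary in Leibniz, and the decay on $f'$ handles the second IBP. Once these are justified, the change of variables is routine and the exponents line up as required.
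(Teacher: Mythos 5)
Your proposal is correct and follows essentially the same route as the paper's proof: two integrations by parts separated by a differentiation under the integral sign, followed by the substitution $t=x/y$. You are in fact somewhat more careful than the paper in tracking which hypothesis kills which boundary term; the only implicit assumption shared by both arguments is that $x>0$, which is needed for the substitution $t=x/y$ to map $(x,\infty)$ onto $y\in(0,1)$ and which holds in the intended application where $x=\lambda>0$.
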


The proofs of these two lemmas are given in the appendix.

\section{Option Pricing for the Variance Gamma Model}

In this section, we present a new method 
based on combining Carr's randomization and fractional derivatives for pricing vanila European options 
with the Variance Gamma process as the underlying proecess. 

Without loss of generality, we assume that there is zero interest rate and no dividend yield. We are interested in computing the European put option price:
\begin{equation}\label{eq:put}
P(S, K, t) =\mathbb{E}_{S(0)=S}[(K-S(t))^{+}],
\end{equation}
where $K$ is the strick price, $t$ is the maturity and $S(t)=e^{X(\gamma(t))}$ is the stock price process introduced in Section~\ref{sec:vg}. 
The European call option price can be computed via the put-call parity.

Recall that $X(\cdot)$ is a Brownian motion with drift $\mu$ and volatility $\sigma$, and its infinitesimal generator is given in \eqref{eq:generator}.
Define
\begin{equation}
u(t,x):=\mathbb{E}_{X_{0}=x}[(K-e^{X_{t}})^{+}].
\end{equation}
It is clear that $u(t,x)$ satisfies the equation:
\begin{equation}
\frac{\partial u}{\partial t}
=\mathcal{L}u,
\end{equation}
with the initial condition $u(0,x)=(K-e^{x})^{+}$.
Taking the Laplace transform with respect to the time variable $t$, we get
\begin{equation}
m(\lambda,x):=\int_{0}^{\infty}\mathbb{E}_{X_{0}=x}[(K-e^{X_{t}})^{+}]e^{-\lambda t}dt=\int_{0}^{\infty}u(t, x)e^{-\lambda t}dt
\end{equation}
satisfies the equation:
\begin{equation}\label{eq:pde}
\lambda m-(K-e^{x})^{+}=\mathcal{L}m.
\end{equation}

By using dominated convergence theorem, one can show that
\begin{equation}
\lim_{x\rightarrow-\infty}m(\lambda,x)=\frac{K}{\lambda},
\qquad
\lim_{x\rightarrow+\infty}m(\lambda,x)=0.
\end{equation}

In addition, for all $n \ge 1$, 
$\left|\frac{\partial^{n}}{\partial\lambda^{n}}u(t, x)e^{-\lambda t}\right|\leq K\lambda^{n}$, uniformly in $(t,x)$. 
Thus, $m(\lambda,x)$ is infinitely differentiable w.r.t. $\lambda$ and
we find that the $n$-th derivative of $m(\lambda,x)$ with respect to (w.r.t.) $\lambda$ is given by
\begin{equation} \label{eq:m-n}
m^{(n)}(\lambda,x):=\frac{\partial^{n}}{\partial\lambda^{n}}m(\lambda,x)= (-1)^n \int_{0}^{\infty}u(t, x)e^{-\lambda t} t^n dt.
\end{equation}

We first present a result which gives analytical expressions of $m^{(n)}(\lambda,x)$ for all $n \ge 0$. 
The main idea of the proof is to use the approach in Carr \cite{Carr98} and the PDE in \eqref{eq:pde} to
iteratively solve the equation:
\begin{equation}
\lambda m^{(n)}+nm^{(n-1)}=\mathcal{L}m^{(n)},
\end{equation}
given $m^{(n-1)}$, $n\geq 1$. The proof is lengthy so we leave the details to the appendix.

\begin{proposition}\label{m0Thm}
We have
\begin{equation}
m(\lambda,x)
=\begin{cases}
\frac{K}{\lambda(\theta_{1}-\theta_{2})}e^{\theta_{1}(x-\log K)}-\frac{1}{\lambda}e^{x}+\frac{K}{\lambda} &\text{for $x\leq\log K$},
\\
\frac{K}{\lambda(\theta_{1}-\theta_{2})}e^{\theta_{2}(x-\log K)} &\text{for $x>\log K$}.
\end{cases}
\end{equation}
In addition, for integer $n \ge 1,$ the $n$-th derivative of $m(\lambda,x)$ w.r.t. $\lambda$ is given by
\begin{align*}
&m^{(n)}(\lambda,x)
\\
&=\begin{cases}
c_{11}^{(n)}e^{\theta_{1}x}+c_{12}^{(n)}xe^{\theta_{1}x}
+\cdots c_{1(n+1)}^{(n)}x^{n}e^{\theta_{1}x}
-\frac{(-1)^{n}n!}{\lambda^{n+1}}e^{x}+\frac{(-1)^{n}n!K}{\lambda^{n+1}} &\text{for $x\leq\log K$},
\\
c_{21}^{(n)}e^{\theta_{2}x}+c_{22}^{(n)}xe^{\theta_{2}x}
+\cdots c_{2(n+1)}^{(n)}x^{n}e^{\theta_{2}x} &\text{for $x>\log K$},
\end{cases}
\end{align*}
where the coefficients can be determined recursively as
\begin{equation}
nc_{i1}^{(n-1)}=\mu c_{i2}^{(n)}
+\sigma^{2}c_{i3}^{(n)}+\sigma^{2}c_{i2}^{(n)}\theta_{i},
\end{equation}
and for $j=2,3,\ldots,n$,
\begin{align*}
nc_{ij}^{(n-1)}
&=-\lambda c_{ij}^{(n)}
+\mu c_{ij}^{(n)}\theta_{i}
+\mu c_{i(j+1)}^{(n)}j
\\
&\qquad
+\frac{1}{2}\sigma^{2} c_{ij}^{(n)}(\theta_{i})^{2}
+\frac{1}{2}\sigma^{2} c_{i(j+2)}^{(n)}(j+1)j
+\frac{1}{2}\sigma^{2} c_{i(j+1)}^{(n)}2j\theta_{i},
\end{align*}
with $c_{i(n+2)}^{(n)}:=0$, and finally  
\begin{align}\label{cn11}
&c_{11}^{(n)}=\frac{\theta_{2}\sum_{j=2}^{n+1}(\log K)^{j-1}(c_{2j}^{(n)}K^{\theta_{2}}-c_{1j}^{(n)}K^{\theta_{1}})
-\frac{(-1)^{n}n!}{\lambda^{n+1}}K}{(\theta_{2}-\theta_{1})K^{\theta_{1}}}
\\
&\qquad\qquad
-\frac{\sum_{j=2}^{n+1}(\log K)^{j-1}(c_{2j}^{(n)}\theta_{2}K^{\theta_{2}}-c_{1j}^{(n)}\theta_{1}K^{\theta_{1}})}
{(\theta_{2}-\theta_{1})K^{\theta_{1}}}
\nonumber
\\
&\qquad\qquad\qquad\qquad
-\frac{\sum_{j=2}^{n+1}(j-1)(\log K)^{j-2}(c_{2j}^{(n)}K^{\theta_{2}}-c_{1j}^{(n)}K^{\theta_{1}})}{(\theta_{2}-\theta_{1})K^{\theta_{1}}},
\nonumber
\end{align}
and
\begin{align}\label{cn21}
&c_{21}^{(n)}=\frac{\theta_{1}\sum_{j=2}^{n+1}(\log K)^{j-1}(c_{2j}^{(n)}K^{\theta_{2}}-c_{1j}^{(n)}K^{\theta_{1}})
-\frac{(-1)^{n}n!}{\lambda^{n+1}}K}{(\theta_{2}-\theta_{1})K^{\theta_{2}}}
\\
&\qquad\qquad
-\frac{\sum_{j=2}^{n+1}(\log K)^{j-1}(c_{2j}^{(n)}\theta_{2}K^{\theta_{2}}-c_{1j}^{(n)}\theta_{1}K^{\theta_{1}})}
{(\theta_{2}-\theta_{1})K^{\theta_{2}}}
\nonumber
\\
&\qquad\qquad\qquad\qquad
-\frac{\sum_{j=2}^{n+1}(j-1)(\log K)^{j-2}(c_{2j}^{(n)}K^{\theta_{2}}-c_{1j}^{(n)}K^{\theta_{1}})}{(\theta_{2}-\theta_{1})K^{\theta_{2}}},
\nonumber
\end{align}
where $\theta_{1}>0>\theta_{2}$ are given by:
\begin{equation}
\theta_{1}:=\frac{-\mu+\sqrt{\mu^{2}+2\lambda\sigma^{2}}}{\sigma^{2}},
\qquad
\theta_{2}:=\frac{-\mu-\sqrt{\mu^{2}+2\lambda\sigma^{2}}}{\sigma^{2}}.
\end{equation}
\end{proposition}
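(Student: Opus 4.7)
The plan is to prove the proposition by induction on $n$. For the base case $n=0$, I would solve the ODE \eqref{eq:pde} directly on each half-line. On $x > \log K$ the equation is homogeneous, so $m$ is a linear combination of the characteristic exponentials $e^{\theta_1 x}$ and $e^{\theta_2 x}$ (the two roots of $\tfrac{1}{2}\sigma^2\theta^2 + \mu\theta - \lambda = 0$), and the decay $m(\lambda, x) \to 0$ as $x \to +\infty$ kills the $e^{\theta_1 x}$ piece. On $x \leq \log K$, I would add the particular solution $K/\lambda - e^x/\lambda$ (which solves the inhomogeneous equation under the implicit risk-neutral condition $\mu + \tfrac{1}{2}\sigma^2 = 0$), and the asymptotic $m(\lambda, x) \to K/\lambda$ as $x \to -\infty$ kills the $e^{\theta_2 x}$ piece. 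The two remaining constants are pinned down by matching $m$ and $\partial_x m$ at $x = \log K$; standard regularity for this Feynman--Kac type PDE gives $C^1$ smoothness across the corner in the payoff.

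For the inductive step, I would differentiate \eqref{eq:pde} in $\lambda$ exactly $n$ times. Since $(K - e^x)^+$ is $\lambda$-independent, this yields
\begin{equation*}
\mathcal{L}m^{(n)} - \lambda m^{(n)} = n\, m^{(n-1)},
\end{equation*}
whose right-hand side is, by the inductive hypothesis, a polynomial of degree at most $n-1$ in $x$ times $e^{\theta_i x}$ on each half-line, plus the $n$-th $\lambda$-derivative of the base-case particular part on $x \leq \log K$. Differentiating $K/\lambda - e^x/\lambda$ in $\lambda$ exactly $n$ times reproduces the stated $\frac{(-1)^n n! K}{\lambda^{n+1}} - \frac{(-1)^n n!}{\lambda^{n+1}} e^x$ contribution. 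Because each $e^{\theta_i x}$ is itself a homogeneous solution, the polynomial ansatz must carry one extra degree, giving $\sum_{j=1}^{n+1} c_{ij}^{(n)} x^{j-1} e^{\theta_i x}$. Substituting this ansatz and invoking $\tfrac{1}{2}\sigma^2\theta_i^2 + \mu\theta_i - \lambda = 0$ cancels the $c_{ij}^{(n)} x^{j-1}$ contributions; matching coefficients of $x^{k-1} e^{\theta_i x}$ on both sides then produces the stated recursion linking $c_{ij}^{(n)}$, $c_{i(j+1)}^{(n)}$, $c_{i(j+2)}^{(n)}$, and $c_{ij}^{(n-1)}$, with the convention $c_{i(n+2)}^{(n)} := 0$ closing it at the top end.

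The uniform bound $|m^{(n)}(\lambda, x)| \leq K n!/\lambda^{n+1}$ (from \eqref{eq:m-n} together with $u(t,x) \leq K$) rules out the growing exponential $e^{\theta_2 x}$ on the left half-line and $e^{\theta_1 x}$ on the right, so the recursion leaves exactly two free constants, $c_{11}^{(n)}$ and $c_{21}^{(n)}$. Imposing continuity of $m^{(n)}$ and of $\partial_x m^{(n)}$ at $x = \log K$ produces a $2 \times 2$ linear system whose explicit solution gives \eqref{cn11} and \eqref{cn21}. The main obstacle will be the bookkeeping: translating the ansatz substitution into the exact indexed recursion stated in the proposition, and inverting the $2 \times 2$ matching system without error given the many indexed summation terms. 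No new conceptual ingredient is needed beyond the method of undetermined coefficients, but the algebra is lengthy, which is presumably why the authors defer it to the appendix.
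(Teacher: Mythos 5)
Your proposal is correct and follows essentially the same route as the paper's appendix proof: solve the ODE \eqref{eq:pde} piecewise with the characteristic roots $\theta_1,\theta_2$, kill the growing exponentials via the boundary behavior at $\pm\infty$ (the paper likewise invokes the martingale condition $\mu+\tfrac12\sigma^2=0$ to simplify the particular solution), then induct on $n$ with the polynomial-times-exponential ansatz and determine $c_{11}^{(n)},c_{21}^{(n)}$ from $C^1$ matching at $x=\log K$. Your explicit appeal to the uniform bound $|m^{(n)}|\le Kn!/\lambda^{n+1}$ to exclude the growing modes in the inductive step is a small point the paper leaves implicit, but it is not a different argument.
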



With this proposition, we are now able to use fractional derivatives to give an explicit expression of European put option prices under the Variance Gamma model. The main result of this paper is given as follows. 

\begin{theorem}\label{MainThm}
Let $n\leq t/\nu-1<n+1$ for some $n\in\mathbb{N}\cup\{0,-1\}$, then
\begin{equation}
P(S,K,t)=\left(\frac{1}{\nu}\right)^{t/\nu}\frac{(-1)^{n+1}}{\Gamma(t/\nu)}D^{t/\nu-1-n}_{\lambda}
m^{(n)}(\lambda,\log S)\bigg|_{\lambda=1/\nu},
\end{equation}
where $D^{t/\nu-1-n}_{\lambda}$ is defined in Lemma \ref{ImplementLemma}
and $m^{(n)}(\lambda,\log S)$ is given in Proposition~\ref{m0Thm}.
\end{theorem}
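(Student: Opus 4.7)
My plan is to condition on the random time $\gamma(t)$, insert the gamma density, and recognize the resulting integral as the fractional $\lambda$-derivative of $m^{(n)}$ at $\lambda = 1/\nu$, using Lemma~\ref{expLemma} to produce the fractional power $\tau^{t/\nu-1-n}$ out of the exponential factor of the Laplace transform.

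First, by the tower property applied to $\gamma(t)$ together with the definition $u(\tau, x) = \mathbb{E}_{X_{0} = x}[(K - e^{X_{\tau}})^{+}]$,
\[
P(S, K, t) = \mathbb{E}[u(\gamma(t), \log S)] = \frac{(1/\nu)^{t/\nu}}{\Gamma(t/\nu)} \int_{0}^{\infty} u(\tau, \log S)\, \tau^{t/\nu - 1}\, e^{-\tau/\nu}\, d\tau,
\]
after inserting the gamma density of $\gamma(t)$ with shape $t/\nu$ and rate $1/\nu$. Write $t/\nu - 1 = n + \beta$ with $\beta = t/\nu - 1 - n \in [0, 1)$. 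The integer factor $\tau^{n}$ is absorbed into $m^{(n)}$ via \eqref{eq:m-n}, while the fractional factor $\tau^{\beta}$ is produced by applying Lemma~\ref{expLemma} to $e^{-\tau \lambda}$ in the $\lambda$-variable, which gives $D^{\beta}_{\lambda} e^{-\tau \lambda} = -\tau^{\beta} e^{-\tau \lambda}$ for each fixed $\tau > 0$. Pushing $D^{\beta}_{\lambda}$ through the $\tau$-integral therefore yields
\[
D^{t/\nu - 1 - n}_{\lambda} m^{(n)}(\lambda, x) = (-1)^{n+1} \int_{0}^{\infty} u(\tau, x)\, \tau^{t/\nu - 1}\, e^{-\tau \lambda}\, d\tau,
\]
and evaluating at $\lambda = 1/\nu$, $x = \log S$ and combining with the first display gives the claimed formula.

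The main obstacle is justifying the interchange of $D^{\beta}_{\lambda}$ with the $\tau$-integral. Unpacking the Weyl-type definition $D^{\beta}_{\lambda} g(\lambda) = \Gamma(1-\beta)^{-1}\frac{d}{d\lambda}\int_{\lambda}^{\infty}(s-\lambda)^{-\beta} g(s)\,ds$, this reduces to a Fubini swap of the $s$-integral with the $\tau$-integral (valid since $0 \le u(\tau,x) \le K$ and the elementary identity $\int_{\lambda}^{\infty}(s-\lambda)^{-\beta} e^{-\tau s}\,ds = \Gamma(1-\beta)\tau^{\beta - 1} e^{-\tau \lambda}$ makes the iterated absolute integral finite) followed by differentiation under the $\tau$-integral, justified by dominated convergence on any $\lambda$-neighborhood of $1/\nu$ bounded away from $0$. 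In the boundary case $n = -1$ (equivalently $0 < t/\nu < 1$), the formal series for $m^{(-1)}$ diverges at $\tau = 0$, so one reads $D^{t/\nu}_{\lambda} m^{(-1)}(\lambda, x)$ directly through its Weyl fractional-integral representation $\Gamma(1 - t/\nu)^{-1}\int_{\lambda}^{\infty}(s - \lambda)^{-t/\nu} m(s, x)\,ds$, which is convergent and, by the same gamma identity, equals $\int_{0}^{\infty} u(\tau, x)\tau^{t/\nu - 1} e^{-\lambda \tau}\, d\tau$; combined with Step~1 this matches the $n = -1$ case of the theorem.
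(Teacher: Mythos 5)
Your proof follows essentially the same route as the paper's: write $P(S,K,t)$ as the gamma-weighted integral of $u(\tau,\log S)$, split $t/\nu-1=n+\beta$, absorb $\tau^{n}$ into $m^{(n)}$ via \eqref{eq:m-n}, and identify the remaining $\tau^{\beta}$ factor as the fractional derivative $D^{\beta}_{\lambda}$ acting on the Laplace kernel. Your treatment is in fact more careful than the paper's (which asserts the interchange of $D^{\beta}_{\lambda}$ with the $\tau$-integral as ``easy to see''), since you justify the Fubini swap and the differentiation under the integral, and you handle the $n=-1$ case explicitly; the argument is correct.
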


\begin{remark}
The Variance Gamma model can be viewed as the Black-Scholes model with random maturity
that follows a gamma distribution. When $t/\nu$ is an integer, the random maturity
has an Erlang distribution, and $D^{t/\nu-1-n}_{\lambda}
m^{(n)}(\lambda,\log S)=m^{(t/\nu-1)}(\lambda,\log S)$ in Theorem \ref{MainThm},
which has an exact expression given in Proposition \ref{m0Thm}.
When $t/\nu$ is a fraction, the fractional derivative $D^{t/\nu-1-n}_{\lambda}$ 
in Theorem \ref{MainThm} can be computed using Lemma \ref{ImplementLemma}
\end{remark}

\begin{proof}[Proof of Theorem \ref{MainThm}]
Recall that $S(t)=e^{X(\gamma(t))}$ and we want to compute:
\begin{equation}
P(S,K,t)=
\int_{0}^{\infty}\mathbb{E}_{X_{0}=\log S}[(K-e^{X_{s}})^{+}]\left(\frac{1}{\nu}\right)^{t/\nu}\frac{s^{t/\nu-1}e^{-\frac{s}{\nu}}}{\Gamma(t/\nu)}ds.
\end{equation}
By definition
\begin{equation}
m(\lambda,x)=\int_{0}^{\infty}\mathbb{E}_{X_{0}=x}[(K-e^{X_{s}})^{+}]e^{-\lambda s}ds.
\end{equation}

Let us assume that $-1<t/\nu-1<1$ otherwise
one can first differentiate $m(\lambda,x)$ with respect to $\lambda$ integer times
and then apply the fractional derivative.

It is easy to see that 
\begin{equation}
D^{\alpha}_{\lambda}m(\lambda,x)
=-\int_{0}^{\infty}\mathbb{E}_{X_{0}=x}[(K-e^{X_{s}})^{+}]s^{\alpha}e^{-\lambda s}ds.
\end{equation}


Hence, we have
\begin{align*}
P(S,K,t)&=
\int_{0}^{\infty}\mathbb{E}_{X_{0}=\log S}[(K-e^{X_{s}})^{+}]\left(\frac{1}{\nu}\right)^{t/\nu}\frac{s^{t/\nu-1}e^{-\frac{s}{\nu}}}{\Gamma(t/\nu)}ds
\\
&=\left(\frac{1}{\nu}\right)^{t/\nu}\frac{(-1)}{\Gamma(t/\nu)}D^{t/\nu-1}_{\lambda}m(\lambda,\log S)\bigg|_{\lambda=1/\nu},
\end{align*}
and the above formula works for $-1<t/\nu-1<1$.


More generally, if $n\leq t/\nu-1<n+1$ for some $n\in\mathbb{N}\cup\{0,-1\}$, then we have
\begin{align*}
P(S,K,t)&=
\int_{0}^{\infty}\mathbb{E}_{X_{0}=\log S}[(K-e^{X_{s}})^{+}]\left(\frac{1}{\nu}\right)^{t/\nu}\frac{s^{t/\nu-1}e^{-\frac{s}{\nu}}}{\Gamma(t/\nu)}ds
\\
&=\left(\frac{1}{\nu}\right)^{t/\nu}\frac{(-1)^{n+1}}{\Gamma(t/\nu)}D^{t/\nu-1-n}_{\lambda}
m^{(n)}(\lambda,\log S)\bigg|_{\lambda=1/\nu}.
\end{align*}
The proof is therefore complete.
\end{proof}


\section{Numerical Experiments}


In this section, we numerically implement the put option price formula
we obtained in Theorem \ref{MainThm}, and compare our results
with some existing methods in the literature, including the alternative
closed-form formula in Madan et al. \cite{Madan}, the Fast Fourier Transform method 
in Carr and Madan \cite{CarrMadan}.


We compare the numerical results (CGZ) with
the Carr and Madan \cite{CarrMadan} Fast Fourier Transform method (FFT)
and the closed-form formula in Madan et al. \cite{Madan} (MCC), 
and summarize the comparision results in Table \ref{Table1}, 
Table \ref{Table2}, Table \ref{Table3} and Table \ref{Table4},  
and Figure \ref{fig1} and Figure \ref{fig2}.

In Table \ref{Table1}, Table \ref{Table2} and Figure \ref{fig1}, we fix
the model parameters to be $K=20$, $S=18$, $\sigma=0.1$ and $\nu=0.2$.
This is the in-the-money case for the put option. The three methods
give precisely the same price for the put options, and their price difference
is negligible. In terms of runtime, the three methods are comparable. 
The FFT method beats MCC all the time. Our method (CGZ)
has a clear advantage when $t/\nu$ is an integer, and
when it is not an integer, our results still show very decent numerical performance.

In Table \ref{Table3}, Table \ref{Table4}, and Figure \ref{fig2}, we fix
the model parameters to be $K=20$, $S=22$, $\sigma=0.1$ and $\nu=0.2$.
This is the out-of-the-money case for the put option.
We obtain similar phenomena as in the in-the-money case 
in Table \ref{Table1}, Table \ref{Table2} and Figure \ref{fig1}.

\begin{table}[h]
	\begin{center}
		\renewcommand{\arraystretch}{1.2}
		\begin{tabular}{ccccc}
			\hline
			$t$&Put Price&\multicolumn{3}{c}{Runtime}\\
			\hline
			&&MCC&FFT&CGZ\\
			\hline\hline
			0.2&    2.0107&    0.0063&    0.0016&    0.0002\\
			0.4&    2.0339&    0.0061&    0.0011&    0.0002\\
			0.6&    2.0662&    0.0064&    0.0010&    0.0002\\
			0.8&    2.1038&    0.0061&    0.0009&    0.0003\\
			1.0&    2.1441&    0.0063&    0.0010&    0.0003\\
			
			\hline
		\end{tabular}
		\vspace{1em}
		\caption{Runtime comparison with $K=20$, $S=18$, $\sigma=0.1$ and $\nu=0.2$ (integer $t/\nu$).}
		\label{Table1}
	\end{center}
\end{table}

\begin{table}[h]
	\begin{center}
		\renewcommand{\arraystretch}{1.2}
		\begin{tabular}{ccccc}
			\hline
			$t$&Put Price&\multicolumn{3}{c}{Runtime}\\
			\hline
			&&MCC&FFT&CGZ\\
			\hline\hline
			0.1&    2.0037&    0.0070&    0.0023&    0.0117\\
			0.3&    2.0209&    0.0065&    0.0014&    0.0122\\
			0.5&    2.0492&    0.0074&    0.0010&    0.0196\\
			0.7&    2.0845&    0.0068&    0.0010&    0.0281\\
			0.9&    2.1237&    0.0064&    0.0009&    0.0355\\
			
			\hline
		\end{tabular}
		\vspace{1em}
		\caption{Runtime comparison with $K=20$, $S=18$, $\sigma=0.1$ and $\nu=0.2$ (fractional $t/\nu$).}
		\label{Table2}
	\end{center}
\end{table}

\begin{table}[h]
	\begin{center}
		\renewcommand{\arraystretch}{1.2}
		\begin{tabular}{ccccc}
			\hline
			$t$&Put Price&\multicolumn{3}{c}{Runtime}\\
			\hline
			&&MCC&FFT&CGZ\\
			\hline\hline
			0.2&    0.0163&    0.0064&    0.0015&    0.0002\\
			0.4&    0.0489&    0.0067&    0.0009&    0.0002\\
			0.6&    0.0919&    0.0065&    0.0010&    0.0002\\
			0.8&    0.1401&    0.0063&    0.0010&    0.0003\\
			1.0&    0.1903&    0.0062&    0.0009&    0.0003\\
			
			\hline
		\end{tabular}
		\vspace{1em}
		\caption{Runtime comparison with $K=20$, $S=22$, $\sigma=0.1$ and $\nu=0.2$ (integer $t/\nu$).}
	\label{Table3}	
	\end{center}
\end{table}

\begin{table}[h]
	\begin{center}
		\renewcommand{\arraystretch}{1.2}
		\begin{tabular}{ccccc}
			\hline
			$t$&Put Price&\multicolumn{3}{c}{Runtime}\\
			\hline
			&&MCC&FFT&CGZ\\
			\hline\hline
			0.1&    0.0058&    0.0066&    0.0022&    0.0116\\
			0.3&    0.0309&    0.0067&    0.0014&    0.0124\\
			0.5&    0.0695&    0.0068&    0.0010&    0.0195\\
			0.7&    0.1156&    0.0068&    0.0010&    0.0276\\
			0.9&    0.1650&    0.0064&    0.0010&    0.0361\\
			
			\hline
		\end{tabular}
		\vspace{1em}
		\caption{Runtime comparison with $K=20$, $S=22$, $\sigma=0.1$ and $\nu=0.2$ (fractional $t/\nu$).}
		\label{Table4}
	\end{center}
\end{table}


\begin{figure}
\centering
\begin{subfigure}
  \centering
  \includegraphics[width=2.8in,clip=true,trim=100 200 100 200]{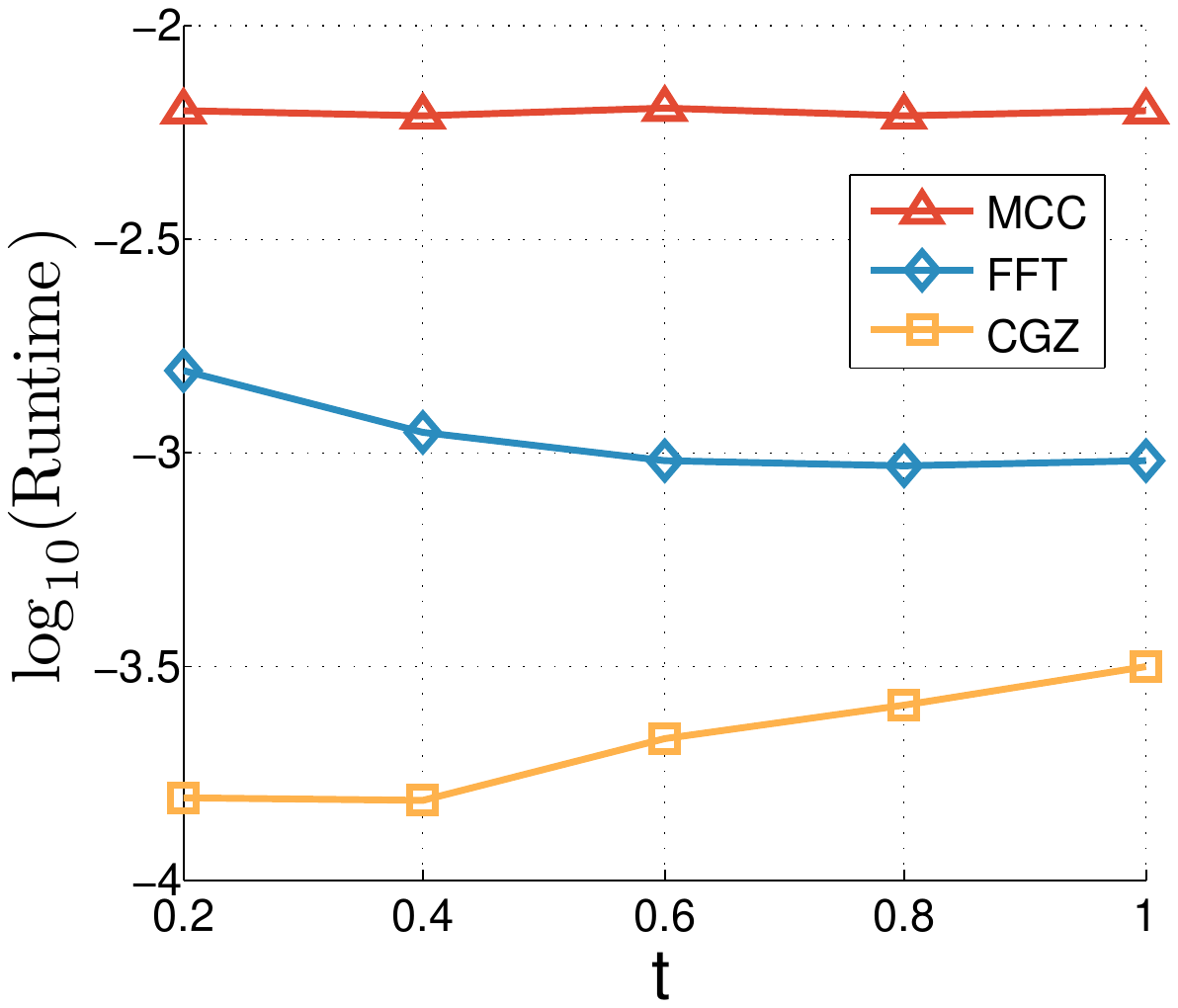}
\end{subfigure}
\begin{subfigure}
  \centering
  \includegraphics[width=2.8in, clip=true, trim=100 200 100 200]{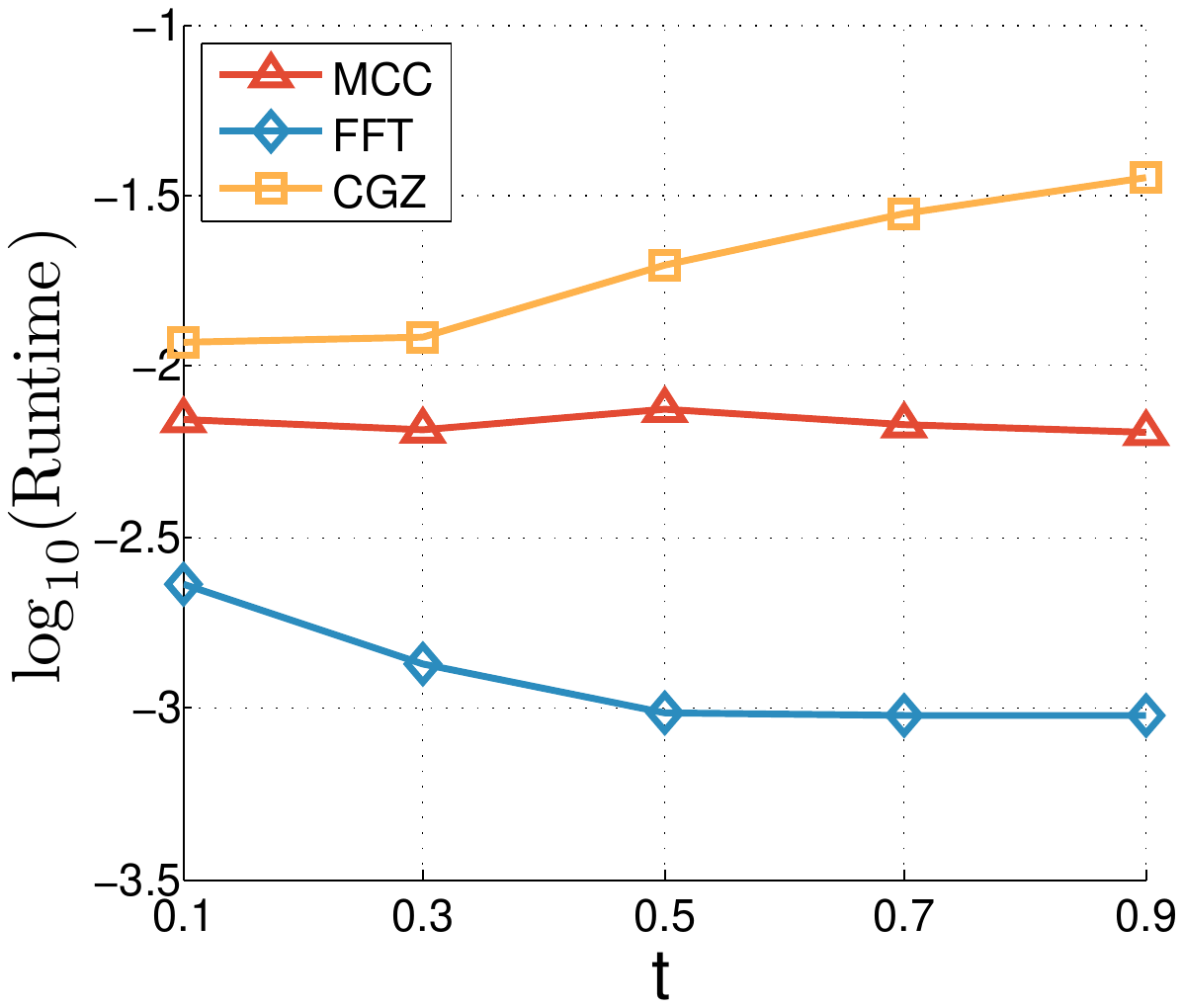}
\end{subfigure}
\caption{Runtime comparison with $K=20$, $S=18$, $\sigma=0.1$ and $\nu=0.2$. 
The picture on the left is when $t/\nu$ is an integer, and
the picture on the right is when $t/\nu$ is a fraction.}
\label{fig1}
\end{figure}

\begin{figure}
\centering
\begin{subfigure}
  \centering
  \includegraphics[width=2.8in,clip=true,trim=100 200 100 200]{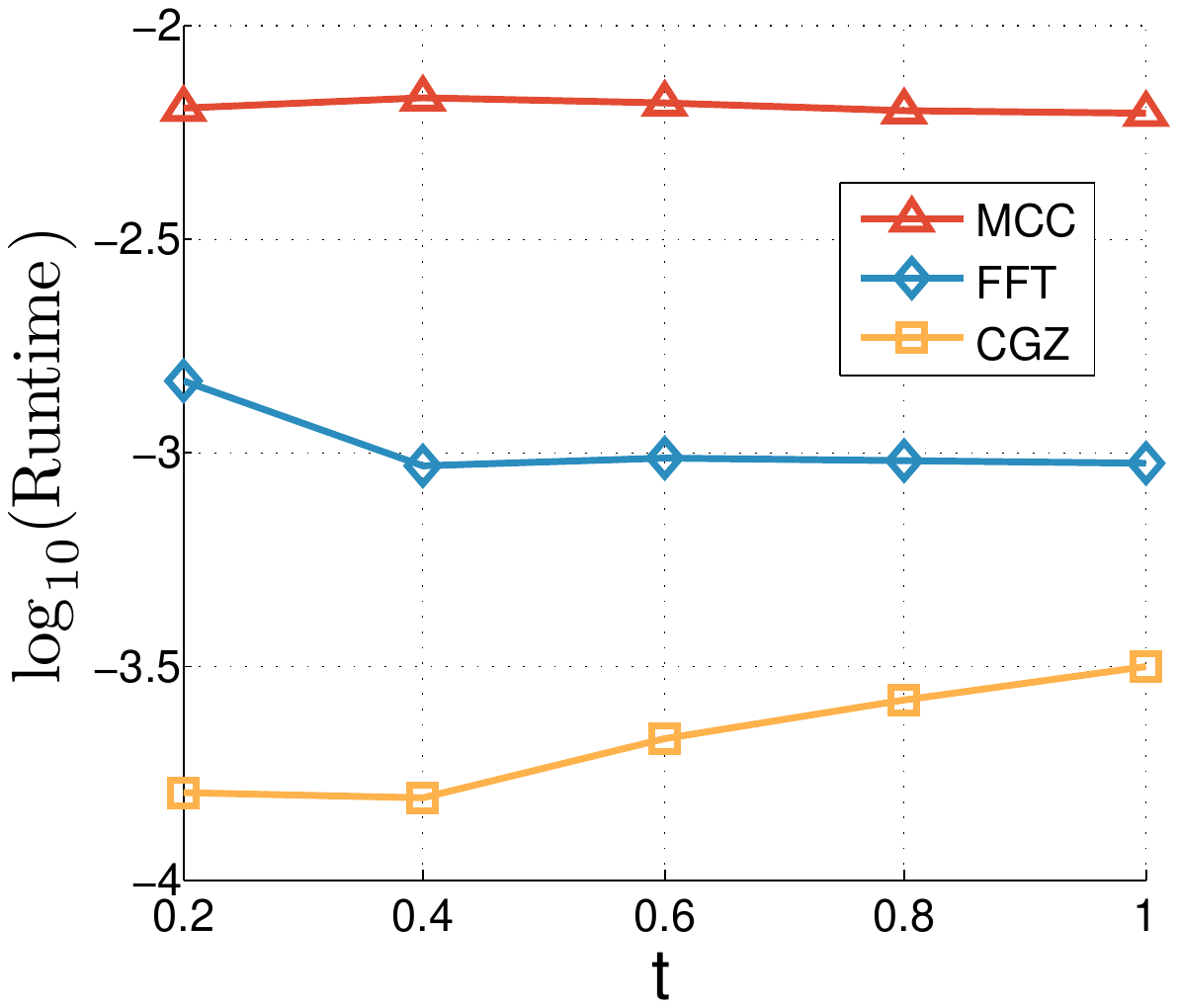}
\end{subfigure}
\begin{subfigure}
  \centering
  \includegraphics[width=2.8in,clip=true,trim=100 200 100 200]{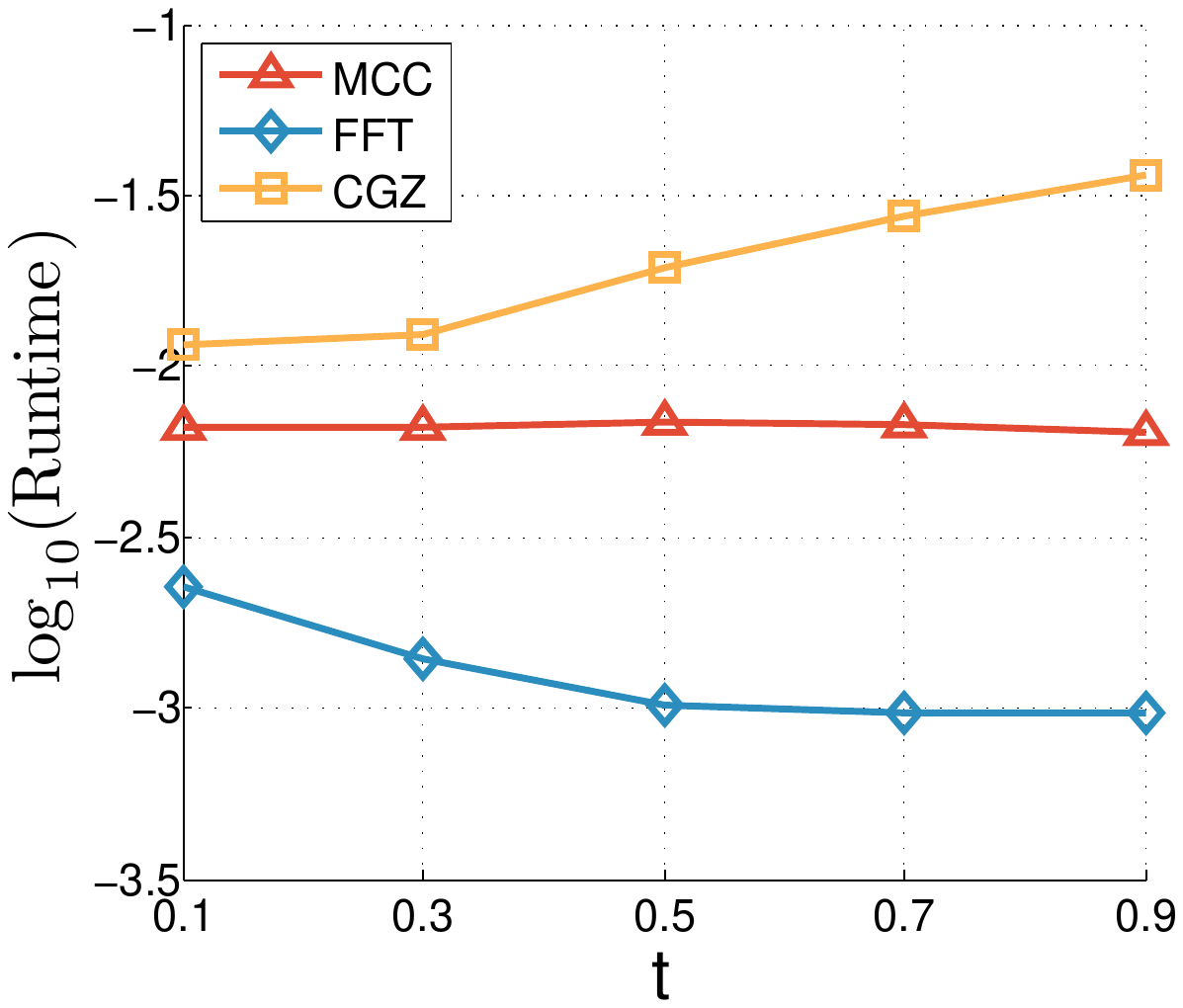}
\end{subfigure}
\caption{Runtime comparison with $K=20$, $S=22$, $\sigma=0.1$ and $\nu=0.2$.
The picture on the left is when $t/\nu$ is an integer, and
the picture on the right is when $t/\nu$ is a fraction.}
\label{fig2}
\end{figure}




\begin{table}[h]
	\begin{center}
		\renewcommand{\arraystretch}{1.2}
		\begin{tabular}{ccccc}
			\hline
			$t$&Put Price&\multicolumn{3}{c}{Runtime}\\
			\hline
			&&MCC&FFT&CGZ\\
			\hline\hline
			0.1000&    0.0020 &   0.0173 &   0.0039&    0.0115\\
			0.1200  &  0.0027 &   0.0126 &   0.0032 &   0.0117\\
			0.1400  &  0.0034 &   0.0117 &   0.0027 &   0.0116\\
			0.1600 &   0.0043 &   0.0111 &   0.0022 &   0.0117\\
			0.1800  &  0.0052 &   0.0098  &  0.0022 &   0.0114\\
			0.2000  &  0.0063 &   0.0092 &   0.0022 &   0.0114\\
			\hline
		\end{tabular}
		\vspace{1em}
		\caption{Runtime comparison for short maturity with $K=35$, $S=50$, $\sigma=0.2$ and $\nu=0.25$ (varying $t$). Price differences are within 1e-5.}
		
	\end{center}
\end{table}

\begin{table}[h]
	\begin{center}
		\renewcommand{\arraystretch}{1.2}
		\begin{tabular}{ccccc}
			\hline
			$t$&Put Price&\multicolumn{3}{c}{Runtime}\\
			\hline
			&&MCC&FFT&CGZ\\
			\hline\hline
			0.0500 &      0.0026&    0.0437&    0.0210 &   0.0138\\
			0.0700 &      0.0038&    0.0435&    0.0209 &   0.0122\\
			0.0900 &      0.0051&    0.0434&    0.0124  &  0.0126\\
			0.1100 &      0.0065&    0.0438&    0.0069   & 0.0122\\
			0.1300 &      0.0081&    0.0442&    0.0067    &0.0129\\
			0.1500 &      0.0097&    0.0439&    0.0065    &0.0127\\
			0.1700 &      0.0115&    0.0460&    0.0052    &0.0130\\
			0.1900 &   0.0134*&    0.0183&    0.0032    &0.0129\\
			\hline
		\end{tabular}
		\vspace{1em}
		\caption{Runtime comparison for short maturity with $K=35$, $S=50$, $\sigma=0.2$ and $\nu=0.5$ (varying $t$). MCC yields NaN for all rows except the last one (denoted by *) 
  Price differences are within 1e-5 (between FFT and CGZ) and 1.2e-5 between the non-NaN one from MCC and CGZ.}
		
	\end{center}
\end{table}

\begin{figure}
\centering
\begin{subfigure}
  \centering
  \includegraphics[width=2.8in,clip=true,trim=100 200 100 200]{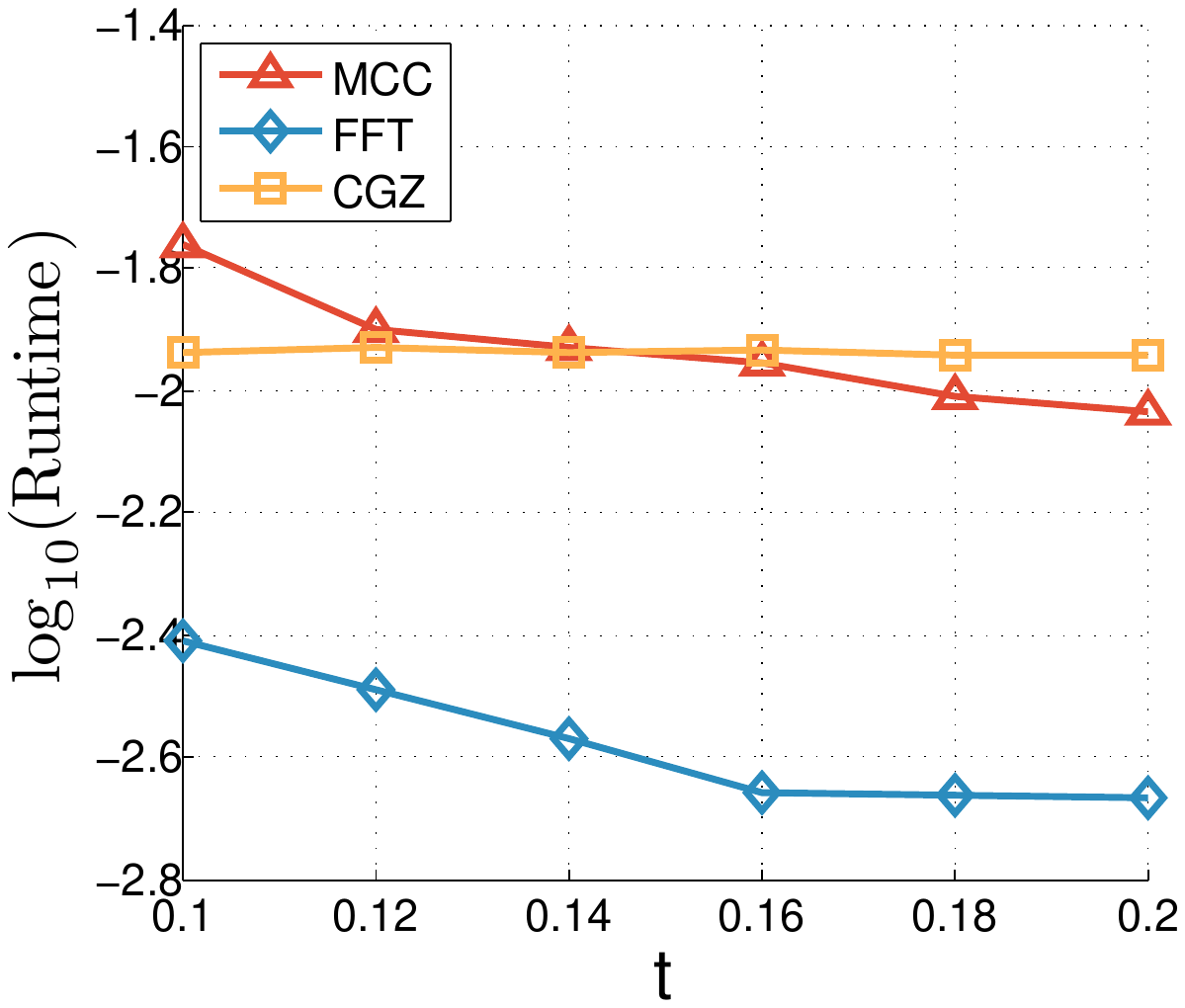}
\end{subfigure}
\begin{subfigure}
  \centering
  \includegraphics[width=2.8in,clip=true,trim=100 200 100 200]{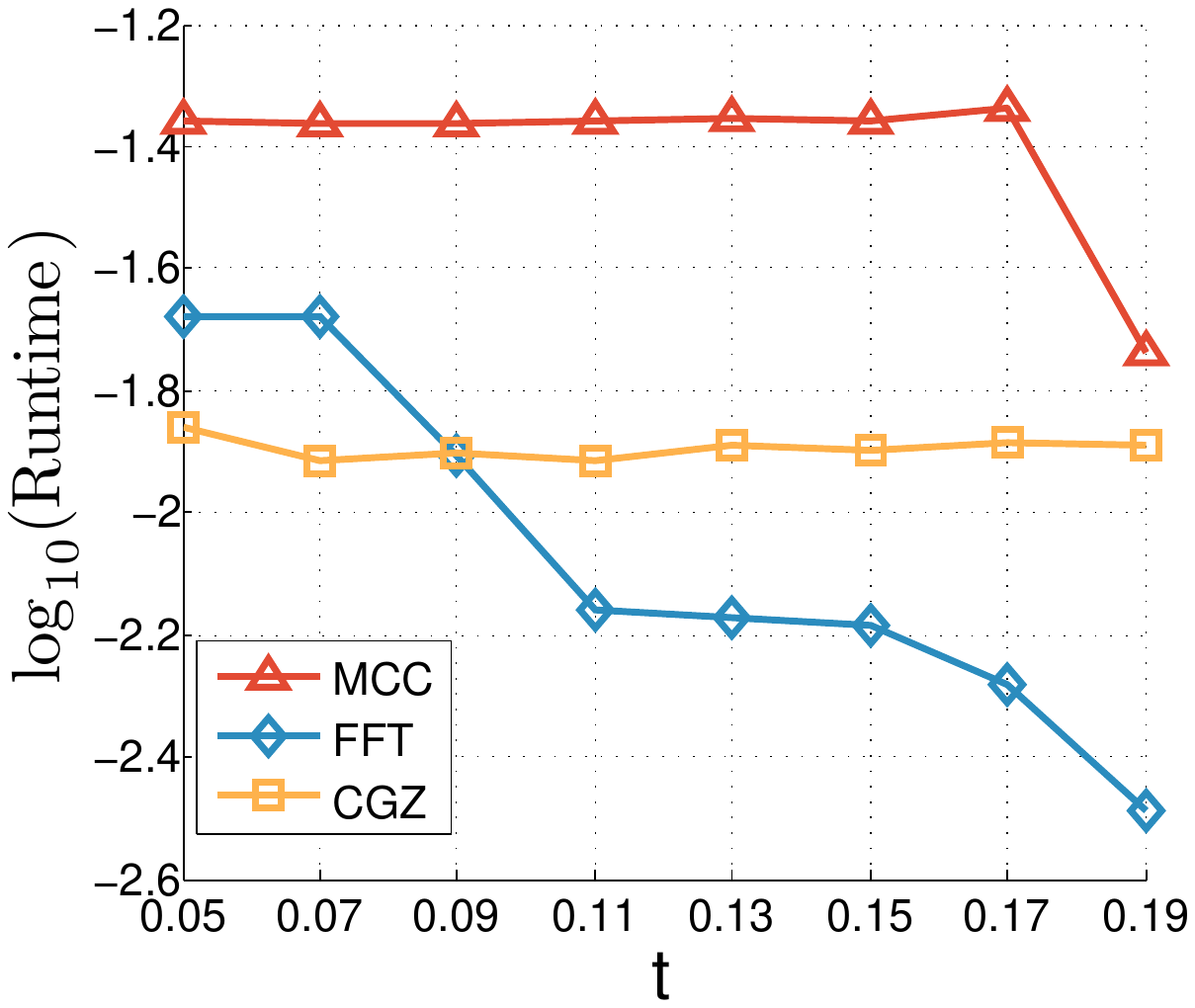}
\end{subfigure}
\caption{Runtime comparison for short maturity with $K=35$, $S=50$, $\sigma=0.2$. 
The picture on the left has $\nu=0.25$ varying $t$ and the picture
on the right has $\nu=0.5$ varying $t$.}
\label{fig3}
\end{figure}

\clearpage
\newpage
\section{Appendix: Proofs}

\begin{proof}[Proof of Lemma \ref{expLemma}]
Let $f(x)=e^{-\lambda x}$. Then by the definition of fractional derivatives we can compute that for $\alpha<1$,
\begin{align*}
D^{\alpha}_{x}f(x)
&=\frac{1}{\Gamma(1-\alpha)}\frac{d}{dx}\int_{x}^{\infty}(t-x)^{-\alpha}f(t)dt
\\
&=\frac{1}{\Gamma(1-\alpha)}\frac{d}{dx}\int_{x}^{\infty}(t-x)^{-\alpha}e^{-\lambda t}dt
\\
&=\frac{1}{\Gamma(1-\alpha)}\frac{d}{dx}x^{1-\alpha}\int_{1}^{\infty}(t-1)^{-\alpha}e^{-\lambda xt}dt
\\
&=\frac{1}{\Gamma(1-\alpha)}\frac{d}{dx}x^{1-\alpha}e^{-\lambda x}\int_{0}^{\infty}t^{-\alpha}e^{-\lambda xt}dt
\\
&=\frac{1}{\Gamma(1-\alpha)}\frac{d}{dx}x^{1-\alpha}e^{-\lambda x}(\lambda x)^{\alpha-1}
\int_{0}^{\infty}t^{-\alpha}e^{-t}dt
\\
&=\frac{d}{dx}e^{-\lambda x}\lambda^{\alpha-1}
\\
&=-\lambda^{\alpha}e^{-\lambda x}.
\end{align*}
Hence the proof is complete.
\end{proof}


\begin{proof}[Proof of Lemma \ref{ImplementLemma}]
From the definition of the fractional derivative, we have
\begin{align*}
D^{\alpha}_{x}f(x)
&=
\frac{1}{\Gamma(1-\alpha)}\frac{d}{dx}
\frac{1}{1-\alpha}\int_{x}^{\infty}f(t)d(t-x)^{1-\alpha}
\\
&=-\frac{1}{\Gamma(1-\alpha)}\frac{d}{dx}
\frac{1}{1-\alpha}\int_{x}^{\infty}(t-x)^{1-\alpha}f'(t)dt
\\
&=\frac{1}{\Gamma(1-\alpha)}\int_{x}^{\infty}(t-x)^{-\alpha}f'(t)dt
\\
&=\frac{-1}{\Gamma(1-\alpha)(1-\alpha)}
\int_{x}^{\infty}(t-x)^{1-\alpha}f''(t)dt.
\end{align*}
By changing variable $t=\frac{x}{y}$, we get $dt=\frac{-x}{y^{2}}dy$
and since $t$ is between $x$ and $\infty$, we have $y$ is between $0$ and $1$, and 
as a result, we get:
\begin{equation}
D^{\alpha}_{x}f(x)
=\frac{-1}{\Gamma(1-\alpha)(1-\alpha)}\int_{0}^{1}\left(\frac{x}{y}-x\right)^{1-\alpha}f''\left(\frac{x}{y}\right)\frac{x}{y^{2}}dy.
\end{equation}
The proof is therefore complete. 
\end{proof}


\begin{proof}[Proof of Proposition~\ref{m0Thm}]
We first compute $m(\lambda, x)$. 
For $x>\log K$, we have
\begin{equation}
\lambda m=\mu m'(x)+\frac{1}{2}\sigma^{2}m''(x),
\end{equation}
and for $x\leq\log K$, we have
\begin{equation}
\lambda m+e^{x}-K=\mu m'(x)+\frac{1}{2}\sigma^{2}m''(x).
\end{equation}
We can solve this equation and get:
\begin{equation}
m^{(0)}(\lambda,x)=m(\lambda,x)=c_{11}^{(0)}e^{\theta_{1}x}+c_{12}^{(0)}e^{\theta_{2}x}
+\left(\frac{K}{\lambda}+c_{13}^{(0)}e^{x}\right),
\end{equation}
for $x\leq\log K$, and
\begin{equation}
m^{(0)}(\lambda,x)=m(\lambda,x)=c_{21}^{(0)}e^{\theta_{2}x}+c_{22}^{(0)}e^{\theta_{1}x},
\end{equation}
for $x>\log K$, where $\theta_{1}>0>\theta_{2}$
are the solutions of the equation:
\begin{equation}
\mu\theta+\frac{1}{2}\sigma^{2}\theta^{2}-\lambda=0,
\end{equation}
and
\begin{equation}
c_{13}^{(0)}=\frac{-1}{\lambda-\mu-\frac{1}{2}\sigma^{2}},
\end{equation}

Note that
\begin{equation}
F(\theta):=\mu\theta+\frac{1}{2}\sigma^{2}\theta^{2}
-\lambda
\end{equation}
is convex in $\theta$ and $F(0)<0$ and $F(-\infty)=F(+\infty)=\infty$,
which implies that $F(\theta)=0$ has exactly two solutions,
one positive and the other one negative.
Hence we have two solutions $\theta_{1}>0>\theta_{2}$.

Next, let us show that $c_{12}^{(0)}=c_{22}^{(0)}=0$.
This comes from the boundary condition for $m(\lambda,x)$
as $x\rightarrow-\infty$ and $x\rightarrow+\infty$.

Hence, we conclude that
\begin{equation}
m(\lambda,x)
=\begin{cases}
c_{11}^{(0)}e^{\theta_{1}x}+c_{13}^{(0)}e^{x}+\frac{K}{\lambda} &\text{for $x\leq\log K$},
\\
c_{21}^{(0)}e^{\theta_{2}x} &\text{for $x>\log K$}.
\end{cases}
\end{equation}

We can then use $m(\lambda,x)$ being $C^{1}$ at $x=\log K$
to determine the coefficients $c_{11}^{(0)},c_{21}^{(0)}$:
\begin{align*}
&c_{11}^{(0)}K^{\theta_{1}}+c_{13}^{(0)}K+\frac{K}{\lambda}=c_{21}^{(0)}K^{\theta_{2}},
\\
&c_{11}^{(0)}\theta_{1}K^{\theta_{1}}+c_{13}^{(0)}K=c_{21}^{(0)}\theta_{2}K^{\theta_{2}}.
\end{align*}

Under the risk neutral measure with zero interest rate and zero dividend yield,
the $S(t)=e^{X(\gamma(t))}$ process is a martingale, and as a result, 
\begin{equation}
\mu+\frac{1}{2}\sigma^{2}=0,
\end{equation}
and thus $c_{13}^{(0)}=\frac{-1}{\lambda}$.
We can then compute that
\begin{equation}
c_{11}^{(0)}=\frac{K^{1-\theta_{1}}}{\lambda(\theta_{1}-\theta_{2})},
\qquad
c_{21}^{(0)}=\frac{K^{1-\theta_{2}}}{\lambda(\theta_{1}-\theta_{2})}.
\end{equation}

We next proceed to compute the first order derivative $m^{(1)}(\lambda,x)$ with respect to $\lambda$.
Recall that
\begin{equation}
\lambda m^{(1)}+m^{(0)}=\mu\frac{\partial}{\partial x}m^{(1)}
+\frac{1}{2}\sigma^{2}\frac{\partial^{2}}{\partial x^{2}}m^{(1)}.
\end{equation}
This implies that
\begin{equation}
m^{(1)}(\lambda,x)
=\begin{cases}
c_{11}^{(1)}e^{\theta_{1}x}+c_{12}^{(1)}xe^{\theta_{1}x}+\frac{1}{\lambda^{2}}e^{x}-\frac{K}{\lambda^{2}} &\text{for $x\leq\log K$},
\\
c_{21}^{(1)}e^{\theta_{2}x}+c_{22}^{(1)}xe^{\theta_{2}x} &\text{for $x>\log K$}.
\end{cases}
\end{equation}
For $x\leq\log K$,
we can compute that
\begin{align*}
&\mu\frac{\partial}{\partial x}c_{12}^{(1)}xe^{\theta_{1}x}
+\frac{1}{2}\sigma^{2}\frac{\partial^{2}}{\partial x^{2}}c_{12}^{(1)}xe^{\theta_{1}x}
-\lambda c_{12}^{(1)}xe^{\theta_{1}x}
\\
&=\mu c_{12}^{(1)}e^{\theta_{1}x}
+\sigma^{2}c_{12}^{(1)}\theta_{1}e^{\theta_{1}x}
=c_{11}^{(0)}e^{\theta_{1}x},
\end{align*}
if we choose
\begin{equation}
c_{12}^{(1)}=\frac{c_{11}^{(0)}}{\mu+\sigma^{2}\theta_{1}}.
\end{equation}
Similarly, we should choose
\begin{equation}
c_{22}^{(1)}=\frac{c_{21}^{(0)}}{\mu+\sigma^{2}\theta_{2}}.
\end{equation}
Finally, $c_{11}^{(1)}$ and $c_{21}^{(1)}$ are chosen
so that $m^{(1)}(\lambda,x)$ is $C^{1}$ in $x$ at $x=\log K$,
which yields that
\begin{align*}
&c_{11}^{(1)}K^{\theta_{1}}+c_{12}^{(1)}(\log K)K^{\theta_{1}}
=c_{21}^{(1)}K^{\theta_{2}}+c_{22}^{(1)}(\log K)K^{\theta_{2}},
\\
&c_{11}^{(1)}\theta_{1}K^{\theta_{1}}+c_{12}^{(1)}K^{\theta_{1}}
+c_{12}^{(1)}(\log K)\theta_{1}K^{\theta_{1}}+\frac{K}{\lambda^{2}}
=c_{21}^{(1)}\theta_{2}K^{\theta_{2}}
+c_{22}^{(1)}K^{\theta_{2}}+c_{22}^{(1)}(\log K)\theta_{2}K^{\theta_{2}}.
\end{align*}

In general, for any $n\geq 1$,
\begin{equation}
\lambda m^{(n)}+nm^{(n-1)}=\mu\frac{\partial}{\partial x}m^{(n)}
+\frac{1}{2}\sigma^{2}\frac{\partial^{2}}{\partial x^{2}}m^{(n)}.
\end{equation}
It follows that
\begin{align*}
&m^{(n)}(\lambda,x)
\\
&=\begin{cases}
c_{11}^{(n)}e^{\theta_{1}x}+c_{12}^{(n)}xe^{\theta_{1}x}
+\cdots c_{1(n+1)}^{(n)}x^{n}e^{\theta_{1}x}
-\frac{(-1)^{n}n!}{\lambda^{n+1}}e^{x}+\frac{(-1)^{n}n!K}{\lambda^{n+1}} &\text{for $x\leq\log K$},
\\
c_{21}^{(n)}e^{\theta_{2}x}+c_{22}^{(n)}xe^{\theta_{2}x}
+\cdots c_{2(n+1)}^{(n)}x^{n}e^{\theta_{2}x} &\text{for $x>\log K$}.
\end{cases}
\end{align*}

For $i=1,2$, we have
\begin{align*}
&n\sum_{j=1}^{n}c_{ij}^{(n-1)}x^{j-1}e^{\theta_{i}x}
\\
&=-\lambda\sum_{j=2}^{n+1}c_{ij}^{(n)}x^{j-1}e^{\theta_{i}x}
+\mu\sum_{j=2}^{n+1}c_{ij}^{(n)}[x^{j-1}\theta_{i}e^{\theta_{i}x}+(j-1)x^{j-2}e^{\theta_{i}x}]
\\
&\qquad
+\frac{1}{2}\sigma^{2}\sum_{j=2}^{n+1}c_{ij}^{(n)}[x^{j-1}(\theta_{i})^{2}e^{\theta_{i}x}+(j-1)(j-2)x^{j-3}e^{\theta_{i}x}
+2(j-1)x^{j-2}\theta_{i}e^{\theta_{i}x}],
\end{align*}
with the understanding that $x^{-k}:=0$ for $k\leq -1$.

It follows that, for $i=1,2$,
\begin{align*}
&n\sum_{j=1}^{n}c_{ij}^{(n-1)}x^{j-1}
\\
&=-\lambda\sum_{j=2}^{n+1}c_{ij}^{(n)}x^{j-1}
+\mu\sum_{j=2}^{n+1}c_{ij}^{(n)}x^{j-1}\theta_{i}
+\mu\sum_{j=1}^{n}c_{i(j+1)}^{(n)}jx^{j-1}
\\
&\qquad
+\frac{1}{2}\sigma^{2}\sum_{j=2}^{n+1}c_{ij}^{(n)}x^{j-1}(\theta_{i})^{2}
+\frac{1}{2}\sigma^{2}\sum_{j=1}^{n-1}c_{i(j+2)}^{(n)}(j+1)jx^{j-1}
+\frac{1}{2}\sigma^{2}\sum_{j=1}^{n}c_{i(j+1)}^{(n)}2jx^{j-1}\theta_{i}
\\
&=-\lambda\sum_{j=2}^{n+1}c_{ij}^{(n)}x^{j-1}
+\mu\sum_{j=2}^{n+1}c_{ij}^{(n)}x^{j-1}\theta_{i}
+\mu\sum_{j=1}^{n}c_{i(j+1)}^{(n)}jx^{j-1}
\\
&\qquad
+\frac{1}{2}\sigma^{2}\sum_{j=2}^{n+1}c_{ij}^{(n)}x^{j-1}(\theta_{i})^{2}
+\frac{1}{2}\sigma^{2}\sum_{j=1}^{n-1}c_{i(j+2)}^{(n)}(j+1)jx^{j-1}
+\frac{1}{2}\sigma^{2}\sum_{j=1}^{n}c_{i(j+1)}^{(n)}2jx^{j-1}\theta_{i}
\\
&=-\lambda\sum_{j=2}^{n+1}c_{ij}^{(n)}x^{j-1}
+\mu\sum_{j=2}^{n+1}c_{ij}^{(n)}x^{j-1}\theta_{i}
+\mu\sum_{j=1}^{n}c_{i(j+1)}^{(n)}jx^{j-1}
\\
&\qquad
+\frac{1}{2}\sigma^{2}\sum_{j=2}^{n+1}c_{ij}^{(n)}x^{j-1}(\theta_{i})^{2}
+\frac{1}{2}\sigma^{2}\sum_{j=1}^{n-1}c_{i(j+2)}^{(n)}(j+1)jx^{j-1}
+\frac{1}{2}\sigma^{2}\sum_{j=1}^{n}c_{i(j+1)}^{(n)}2jx^{j-1}\theta_{i}.
\end{align*}
This implies that
\begin{equation}
nc_{i1}^{(n-1)}=\mu c_{i2}^{(n)}
+\sigma^{2}c_{i3}^{(n)}+\sigma^{2}c_{i2}^{(n)}\theta_{i},
\end{equation}
and for $j=2,3,\ldots,n$,
\begin{align*}
nc_{ij}^{(n-1)}
&=-\lambda c_{ij}^{(n)}
+\mu c_{ij}^{(n)}\theta_{i}
+\mu c_{i(j+1)}^{(n)}j
\\
&\qquad
+\frac{1}{2}\sigma^{2} c_{ij}^{(n)}(\theta_{i})^{2}
+\frac{1}{2}\sigma^{2} c_{i(j+2)}^{(n)}(j+1)j
+\frac{1}{2}\sigma^{2} c_{i(j+1)}^{(n)}2j\theta_{i},
\end{align*}
with $c_{i(n+2)}^{(n)}:=0$.

This gives us $c_{ij}^{(n)}$ for $j=2,3,\ldots,n+1$, $i=1,2$.
To determine $c_{11}^{(n)}$ and $c_{21}^{(n)}$,
we use the fact that $m^{(n)}(\lambda,x)$ is $C^{1}$ in $x$
and in particular, it is $C^{1}$ at $x=\log K$:
\begin{align*}
&\sum_{j=1}^{n+1}c_{1j}^{(n)}(\log K)^{j-1}K^{\theta_{1}}
=\sum_{j=1}^{n+1}c_{2j}^{(n)}(\log K)^{j-1}K^{\theta_{2}}\,,
\\
&\sum_{j=1}^{n+1}c_{1j}^{(n)}\left[(\log K)^{j-1}\theta_{1}K^{\theta_{1}}+(j-1)(\log K)^{j-2}K^{\theta_{1}}\right]
-\frac{(-1)^{n}n!}{\lambda^{n+1}}K
\\
&\qquad\qquad\qquad
=\sum_{j=1}^{n+1}c_{2j}^{(n)}\left[(\log K)^{j-1}\theta_{2}K^{\theta_{2}}+(j-1)(\log K)^{j-2}K^{\theta_{2}}\right]\,.
\end{align*}
This yields the equations \eqref{cn11} and \eqref{cn21}.
\end{proof}



\end{document}